\documentclass[11pt,reqno]{amsart}
\usepackage{amssymb}
\usepackage{amsmath}
\usepackage{amsmath}
\usepackage{amsthm}
\usepackage{amsfonts}
\usepackage{graphicx}
\usepackage{enumerate}
\usepackage[bottom,flushmargin,hang]{footmisc}
\usepackage{parskip} 
\setlength{\parindent}{15pt} 
\usepackage{bm}
\usepackage{bbm} 
\usepackage{dsfont}
\usepackage[authoryear,longnamesfirst]{natbib}
\linespread{1.2}

\usepackage{multirow}

\setcounter{MaxMatrixCols}{10}

\usepackage{multicol}
\usepackage{tikz}

\numberwithin{equation}{section}
\newtheorem{theorem}{Theorem}

\newtheorem{corollary}{Corollary}

\newtheorem{lemma}{Lemma}
\theoremstyle{definition}

\newtheorem{remark}{Remark}

\setlength{\textwidth}{15cm} \setlength{\oddsidemargin}{1cm}
\setlength{\evensidemargin}{1cm}

\newcommand{\supp}{\mathrm{supp}}
\renewcommand{\tilde}{\widetilde}

\newcommand{\Gn}{\mathbb{G}_{n}}

\newcommand{\calF}[0]{\mathcal{F}}
\newcommand{\calG}[0]{\mathcal{G}}

\newcommand{\calS}[0]{\mathcal{S}}

\newcommand{\E}[0]{\mathbb{E}}
\newcommand{\R}[0]{\mathbb{R}}

\newcommand{\N}{\mathbb{N}}


\renewcommand{\i}{{\bm{i}}}
\newcommand{\bN}{{\bm{N}}}
\newcommand{\EN}{\mathbb{\E}_{N}}
\newcommand{\e}{{\bm{e}}}
\newcommand{\calE}{\mathcal{E}}

\begin{document}

\title[Maximal Inequalities for SE Processes]{Maximal Inequalities for Separately Exchangeable Empirical Processes}
\author[H. D. Chiang]{Harold D. Chiang}

\address[Harold D. Chiang]{Department of Economics, University of Wisconsin-Madison, 1180 Observatory Drive Madison, WI 53706-1393, USA.}
\email{hdchiang@wisc.edu}
\thanks{First draft: 14 February 2025. I wish to express my sincere gratitude to Bruce E. Hansen for his encouragement and comments, which has been instrumental in the preparation of this manuscript.
}
\begin{abstract}
This paper derives new maximal inequalities for empirical processes associated with separately exchangeable random arrays. For fixed index dimension $K\ge 1$, we establish a global maximal inequality bounding the $q$-th moment ($q\in[1,\infty)$) of the supremum of these processes. We also obtain a refined local maximal inequality controlling the first absolute moment of the supremum. Both results are proved for a general pointwise measurable function class. Our approach uses a new technique partitioning the index set into transversal groups, decoupling dependencies and enabling more sophisticated higher moment bounds.

\end{abstract}
 \allowdisplaybreaks
\maketitle
\section{Introduction}

This paper develops novel local and global maximal inequalities for empirical processes of separately exchangeable arrays, where the index dimension \(K\in\mathbb{N}\) is fixed but arbitrary and the empirical processes are defined on general classes of functions. 
Separately exchangeable (SE) arrays are widely utilised in modelling multiway-clustered random variables and/or \(K\)-partite networks in econometrics and statistics (see, for example, \citealt{davezies2021empirical,mackinnon2021wild,menzel2021bootstrap,chiang2022multiway,graham2024sparse}). 

Maximal inequalities are powerful tools that are indispensable in the analysis of numerous econometric and statistical problems. They have proven crucial in areas such as semiparametric estimation and debiased machine learning (e.g. \citealt{belloni2015uniform,chernozhukov2018double}), quantile and instrumental variable quantile regression (see, for instance, \citealt{kato2012asymptotics,chetverikov2016iv,galvao2020unbiased}), conditional mode estimation (\citealt{ota2019quantile}), testing many moment inequalities \citep{chernozhukov2019inference}, adversarial learning (\citealt{kaji2023adversarial}), targeted minimum loss-based estimation (\citealt{van2017generally}), and density estimation for dyadic data (\citealt{cattaneo2024uniform}), to name but a few.

Despite promising recent advances in the studies of SE arrays and its potential benefits for statistical inference and econometric applications, the theoretical literature on maximal inequalities for SE arrays remains rather scarce. In particular, no general global maximal inequality is available. While certain global maximal inequalities have been established for special cases—for instance, Theorem B.2 in \cite{chiang2023inference} provides a bound for a setting with general \(K\) and any \(q\)-th moment (with \(q\in[1,\infty)\)) albeit only for a finite class of functions, and Lemma C.3 in \cite{liu2024estimation} extends the result for a general class of functions but only for the first absolute moment (\(q=1\)) in two-way (\(K=2\)) settings
—no general global maximal inequality applicable to arbitrary \(K,q\) and an infinite pointwise measurable class of functions has been developed for SE arrays. 

The local maximal inequality is a crucial tool for obtaining sharp convergence rates.
Apart from the classical i.i.d. case (\(K=1\)), no local maximal inequality for SE arrays exists in the literature.
 Establishing a local maximal inequality for SE arrays is especially challenging because their intricate multiway dependence structure induces complex interactions among observations. Unlike in the i.i.d. or \(U\)-statistics settings, the multidimensional dependencies inherent to SE arrays render classical techniques—such as symmetrisation and Hoeffding averaging—inapplicable.
To address these challenges, we propose a novel proof strategy that carefully partitions the index set into transversal groups. This innovative construction effectively decouples the dependencies among observations within each group, thereby enabling the application of the Hoffmann–J\o rgensen inequality to derive sharp bounds for terms involving higher moments. Consequently, our work fills a critical gap by providing both global and local maximal inequalities for a potentially uncountable but pointwise measurable class of functions under SE sampling.

These methodological advances are underpinned by foundational results from empirical process theory. For textbook treatments, see e.g. \cite{vdVW1996,delaPenaGine1999,gine2016mathematical}. In particular, the local maximal inequalities for i.i.d.\ random variables established in \cite{vanderVaartWellner2011} and \cite{CCK2014AoS}, as well as maximal inequalities for \(U\)-statistics/processes presented in \cite{Chen2018,ChenKato2019b}, provide the technical backbone for our arguments. Our results also built directly upon the symmetrisation and Hoeffding type decomposition for SE arrays developed in \cite{chiang2023inference}.

We follow the fundamental notation for SE arrays as presented in \cite{davezies2021empirical, chiang2023inference}. To fix ideas, let $K$ be a fixed positive integer and denote by
\[
\bm{i} = (i_1, i_2, \dots, i_K) \in \mathbb{N}^K,
\]
a $K$-tuple index. Given a probability space $(S,\mathcal{S},P)$, suppose that \[\{ X_{\bm{i}} : \bm{i} \in \mathbb{N}^K \}\] 
is a collection of $\mathcal{S}$-valued random variables satisfying the separate exchangeability (SE) and dissociation (D) conditions defined below.
\begin{enumerate}
\item[(SE)]  For any $\pi =(\pi_1,\dots,\pi_K)$, a $K$-tuple of permutations of $\N$, $\{X_{\bm{i}}\}_{\i\in \N^K}$ and $\{X_{\pi(\bm{i})}\}_{\i\in \N^K}$ are identically distributed.
\item[(D)]  For any two set of indices $I,I'\subset \N^K$, $\{X_{\bm{i}}\}_{\i\in I}$ and $\{X_{\bm{i}}\}_{\i\in I'}$  are independent.
\end{enumerate}
Under Conditions (SE) and (D), the Aldous--Hoover--Kallenberg (AHK) representation (see Corollary 7.35 in \citealt{kallenberg2005probabilistic}) guarantees the existence of the following representation:
\begin{align}
	X_{\bm{i}} = \tau\Bigl( \{ U_{\bm{i} \odot \bm{e}} \}_{\bm{e} \in \{0,1\}^K \setminus \{\bm{0}\}} \Bigr), \label{eq:AHK_representation}
\end{align}
where $\odot$ denotes the Hadamard (element-wise) product, the collection 
\[
\{ U_{\bm{i} \odot \bm{e}} : \bm{i} \in \mathbb{N}^K,\; \bm{e} \in \{0,1\}^K \setminus \{\bm{0}\} \}
\]
consists of mutually independent and identically distributed (i.i.d.) random variables, and $\tau$ is a Borel measurable map taking values in $\calS$.

	Let $\bm{N} = (N_1, N_2, \dots, N_K)$ and define 
	\[
	[\bm{N}] = \prod_{k=1}^{K} \{1,2,\dots,N_k\}.
	\]
	Also, denote
	\[
	N = \prod_{k=1}^K N_k,\quad n = \min\{N_1, N_2, \dots, N_K\},\quad \text{and}\quad \overline{N} = \max\{N_1, N_2, \dots, N_K\}.
	\]
	We say a class of functions $\calF:\calS\to \R$ is pointwise measurable if there exists a countable subclass $\calF'\subset \calF$ such that for each $f\in \calF$, there exists a sequence $(f_j)_j\subset \calF'$ such that $f_j\to f$ pointwisely.
	Given the observed set of random variables  \(\{X_{\bm{i}}:\i \in [\bm{N}]\}\) that satisfy Conditions (SE) and (D), and a pointwise measurable class of functions $\calF$ with elements $f : \calS \to \mathbb{R}$, define the sample mean process by
	\[
	\EN f = \frac{1}{N} \sum_{\bm{i} \in [\bm{N}]} f(X_{\bm{i}})
	\]
	and the empirical process by
	\[
	\Gn(f) = \frac{\sqrt{n}}{N} \sum_{\bm{i} \in [\bm{N}]} \Bigl\{ f(X_{\bm{i}}) - \E\bigl[f(X_{\bm{1}})\bigr] \Bigr\},
	\]
	where $\bm{1}=(1,...,1)$.
	Without loss of generality, assume that $\E[f(X_{\bm{1}})] = 0$ for all $f \in \calF$. In this paper, we establish inequalities that control the $q$-th moment of the supremum of the empirical process,
	\(
 \E\bigl[\|\Gn\|_\calF^q \bigr] ,
	\)
	for some $q \in [1,\infty)$.

\subsection{Notation}
Let $\N$ denote the set of positive integers and $\R$ for the real line. For $a,b\in \R$, let $a\vee b=\max\{a,b\}$ and $a\wedge b = \min\{a,b\}$. 
 Denote for $m\in \mathbb N$ that
$
[m] = \{1,2,\ldots,n\}.
$ 
For two real vectors $\bm{a}= (a_{1},\dots,a_{K})$ and $\bm{b} = (b_{1},\dots,b_{K})$, we denote $\bm{a} \le \bm{b}$ for $a_{j} \le b_{j}$ for all $1 \le j \le K$. 
Let $\supp(\bm{a}) = \{ j : a_j \ne 0\}$. 
We denote by $\odot$ the Hadamard product, i.e., for $\i = (i_1,\dots,i_K)$ and $\bm{j} = (j_1,\dots,j_K)$, $\i \odot \bm{j}= (i_1 j_1,\dots,i_K j_K)$. 
For each $k=1,2,...,K$, define $\calE_k=\{\e\in \{0,1\}^K: \e \odot (1,...,1)=k\}$ and thus $\{0,1\}^K=\cup_{k=1}^K \calE_k$. For $q\in [1,\infty]$, let $\|f\|_{Q,q}=(Q|f|^q)^{1/q}$. For a non-empty set $T$ and $f:T\to \R$, denote $\|f\|_T=\sup_{t\in T}|f(t)|$. For a pseudometric space $(T,d)$, let $N(T,d,\varepsilon)$ denote the $\varepsilon$-covering number for $(T,d)$. We say $F:\calS\to \R_+$ is an envelope for a class of functions $\calF\ni f:\calS\to \R$ if $\sup_{f\in\calF}|f(x)|\le F(x)$ for all $x\in\calS$. For $0 < \beta < \infty$, let $\psi_{\beta}$ be the function on $[0,\infty)$ defined by $\psi_{\beta} (x) = e^{x^{\beta}}-1$. Let $\| \cdot \|_{\psi_{\beta}}$ denote the associated Orlicz norm, i.e., 
$\| \xi \|_{\psi_\beta}=\inf \{ C>0: \E[ \psi_{\beta}( | \xi | /C)] \leq 1\}$ for a real-valued random variable $\xi$.

\section{Main Results}\label{sec:maximal_inequalities_SE}

Before presenting the main results, let us first introduce the Hoeffding type decomposition from \cite{chiang2023inference}.
	 For any \(\bm{i}\in [\bm{N}]\), define
	\begin{align*}
		(P_{\bm{e}}f)\Bigl(\{U_{\bm{i}\odot \bm{e}'}\}_{\bm{e}'\le \bm{e}}\Bigr)
		&=\E\Bigl[f(X_{\bm{i}})\,\Big|\,\{U_{\bm{i}\odot \bm{e}'}\}_{\bm{e}'\le \bm{e}}\Bigr]. 
	\end{align*}
	We then define recursively for \(k=1,2,\dots, K\) that
	\begin{align*}
		(\pi_{\bm{e}_k}f)(U_{\bm{i}\odot \bm{e}_k})
		&=(P_{\bm{e}_k}f)(U_{\bm{i}\odot \bm{e}_k}),
	\end{align*}
	and for \(\bm{e}\in \bigcup_{k=2}^K\calE_k\) set
	\begin{align*}
		(\pi_{\bm{e}}f)\Bigl(\{U_{\bm{i}\odot \bm{e}'}\}_{\bm{e}'\le \bm{e}}\Bigr)
		=\;& (P_{\bm{e}}f)\Bigl(\{U_{\bm{i}\odot \bm{e}'}\}_{\bm{e}'\le \bm{e}}\Bigr) \nonumber\\
		&\quad-\sum_{\substack{\bm{e}'\le \bm{e}\\ \bm{e}'\ne \bm{e}}}
		(\pi_{\bm{e}'}f)\Bigl(\{U_{\bm{i}\odot \bm{e}''}\}_{\bm{e}''\le \bm{e}'}\Bigr). 
	\end{align*}
	Note that by the AHK representation \eqref{eq:AHK_representation}, for a fixed \(\bm{e}\) the distributions of
	\[
	(P_{\bm{e}}f)\Bigl(\{U_{\bm{i}\odot \bm{e}'}\}_{\bm{e}'\le \bm{e}}\Bigr)
	\quad\text{and}\quad
	(\pi_{\bm{e}}f)\Bigl(\{U_{\bm{i}\odot \bm{e}'}\}_{\bm{e}'\le \bm{e}}\Bigr)
	\]
	do not depend on the index \(\bm{i}\). Hence, we shall write \(P_{\bm{e}}f\) and \(\pi_{\bm{e}}f\) for a generic \(\bm{i}\).
	
	Now, fix any \(1\le k\le K\) and let \(\bm{e}\in \calE_k\). Then, by Lemma~1 in \cite{chiang2023inference}, for any \(\ell\in \supp(\bm{e})\) the random variable
	\(
	(\pi_{\bm{e}}f)\Bigl(\{U_{\bm{i}\odot \bm{e}'}\}_{\bm{e}'\le \bm{e}}\Bigr)
	\)
	has mean zero conditionally on \(\{U_{\bm{i}\odot \bm{e}'}\}_{\bm{e}'\le \bm{e}-\bm{e}_{\ell}}\).
	In addition, define
	\(
	I_{\bm{N},\bm{e}} = \{\bm{i}\odot \bm{e} : \bm{i}\in [\bm{N}]\}.
	\)
	Then, we have
	\[
	\bigl|I_{\bm{N},\bm{e}}\bigr| = \prod_{k'\in\supp(\bm{e})} N_{k'}.
	\]
	Accordingly, define
	\begin{align*}
		H_{\bm{N}}^{\bm{e}}(f)
		=\frac{1}{\bigl|I_{\bm{N},\bm{e}}\bigr|} \sum_{\bm{i}\in I_{\bm{N},\bm{e}}}
		(\pi_{\bm{e}}f)\Bigl(\{U_{\bm{i}\odot \bm{e}'}\}_{\bm{e}'\le \bm{e}}\Bigr),
	\end{align*}
	 we now obtain the Hoeffding-type decomposition
	\begin{align*}
		\EN f = \sum_{k=1}^K \sum_{\bm{e}\in\calE_k} H_{\bm{N}}^{\bm{e}}(f).
	\end{align*}
	To bound \(\E\bigl[ \|\Gn(f)\|_{\calF} \bigr] \), it thus suffices to control each individual term \(\E\bigl[\|H_{\bm{N}}^{\bm{e}}(f)\|_{\calF}\bigr]\) separately.
	
	Finally, fix any \(1\le k\le K\) and \(\bm{e}\in \calE_k\).  For a given class $\calF$ with an envelope $F$, define for a $\delta>0 $ its \emph{uniform entropy integral} by
	\begin{align*}
		J_{\bm{e}}(\delta) = J_{\bm{e}}(\delta,\calF,F)
		:= \int_{0}^{\delta} \sup_{Q} \Biggl\{ 1+ \log N\Bigl(P_{\bm{e}}\calF,\|\cdot\|_{Q,2},\tau\|P_{\bm{e}}F\|_{Q,2}\Bigr) \Biggr\}^{k/2} d\tau, 
	\end{align*}
	where
	\[
	P_{\bm{e}}\calF := \{P_{\bm{e}}f : f\in \calF\},
	\]
	and the supremum is taken over all finite discrete distributions \(Q\).
	
The following result is a general global maximal inequality for SE empirical processes with an arbitrary index order $K$ and for a general order of moment $q\in[1,\infty)$. Its proof follows the  arguments in the proof of Corollary B.1 in \cite{chiang2023inference} with some modifications  to account for a  more general class of functions.
\begin{theorem}[Global maximal inequality for SE processes]\label{theorem:global_maximal_ineq}
	Suppose $\calF:\calS\to \R$ is a pointwise measurable class of functions with an envelope $F$.
	Let $(X_\i)_{\i\in [\bm{N}]}$ be a sample from $S$-valued separately exchangeable random vectors $(X_\i)_{\i\in \N^K}$. Pick any $1 \le k \le K$ and $\bm{e} \in \mathcal{E}_{k}$. Then, for any $q \in [1,\infty)$, we have
	\[
	|I_{\bN,\e}|^{1/2}\left (\E \left [ \left \| H_{\bN}^\e(f)  \right \|_{\calF}^{q} \right ] \right)^{1/q} \lesssim
	J_\e(1)  \| F \|_{P,q\vee 2}.
	\]
\end{theorem}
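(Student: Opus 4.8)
The plan is to treat $H_{\bN}^{\bm{e}}(f)$ as the separately exchangeable analogue of a completely degenerate $U$-process of order $k=|\supp(\bm{e})|$ and to adapt the global maximal inequality for degenerate $U$-processes (in the spirit of \cite{ChenKato2019b}) to this setting, following the scheme of Corollary~B.1 in \cite{chiang2023inference}. The starting point is that, by Lemma~1 of \cite{chiang2023inference} recalled above, $\pi_{\bm{e}}f$ is completely degenerate: for every $\ell\in\supp(\bm{e})$ it has mean zero conditionally on $\{U_{\bm{i}\odot\bm{e}'}\}_{\bm{e}'\le\bm{e}-\bm{e}_{\ell}}$. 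After substituting the AHK representation \eqref{eq:AHK_representation}, $H_{\bN}^{\bm{e}}(f)$ is a mean-zero sum over the product grid $\prod_{\ell\in\supp(\bm{e})}\{1,\dots,N_\ell\}$ in which the ``top-level'' variables $U_{\bm{i}\odot\bm{e}}$ occur in exactly one summand while each lower-level variable $U_{\bm{i}\odot\bm{e}'}$, $\bm{e}'<\bm{e}$, is shared across a slice of summands.

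First I would symmetrise, using the separately exchangeable symmetrisation of \cite{chiang2023inference}: introduce mutually independent Rademacher sign sequences $\{\varepsilon^{(\ell)}_i\}_i$, one per active direction $\ell\in\supp(\bm{e})$, and bound $\E[\|H_{\bN}^{\bm{e}}(f)\|_\calF^q]$ by a constant depending only on $k$ and $q$ times the $q$-th moment of the supremum over $\calF$ of the symmetrised process obtained by multiplying the $\bm{i}$-th summand of $H_{\bN}^{\bm{e}}(f)$ by $\prod_{\ell\in\supp(\bm{e})}\varepsilon^{(\ell)}_{i_\ell}$. Next I would decouple, replacing the shared lower-level variables by independent copies across directions (de la Pe\~na--Gin\'e type decoupling, again at the cost of a $k$-dependent constant), to obtain a fully decoupled completely degenerate Rademacher chaos of order $k$. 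Then I would peel off the active directions one at a time: conditioning on all randomness except that of a single direction, the inner sum is a symmetrised i.i.d.\ empirical process indexed by a class built from $P_{\bm{e}}\calF$, to which the standard uniform-entropy maximal inequality (e.g.\ in \cite{vdVW1996}) applies and contributes one factor $\sup_Q\{1+\log N(P_{\bm{e}}\calF,\|\cdot\|_{Q,2},\tau\|P_{\bm{e}}F\|_{Q,2})\}^{1/2}$ together with an envelope norm of $P_{\bm{e}}F$ in the (random) empirical measure of that stage. Iterating over all $k$ active directions produces the entropy factor to the power $k/2$, i.e.\ exactly $J_{\bm{e}}(1)$, the supremum over discrete $Q$ in the definition of $J_{\bm{e}}$ being precisely what absorbs the random empirical measures arising at the intermediate stages.

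It remains to collapse the envelope. For $q\ge2$ the $L^q$ version of the i.i.d.\ maximal inequality delivers the envelope directly in $L^q$; for $q<2$, Jensen lets one run the whole argument with the $L^2$ version, which is the source of $q\vee2$. The random empirical-measure envelope norms surviving at the innermost stage are then bounded via $\|P_{\bm{e}}F\|_{Q,r}\le\|F\|_{Q,r}$ (Jensen, as $P_{\bm{e}}$ is a conditional expectation) followed by a Hoffmann--J\o rgensen / moment-comparison step turning $\E$ of an empirical $r$-th moment of $F$, raised to the power $q/r$, into $\|F\|_{P,q\vee2}^q$. Assembling the constants, all depending only on $k\le K$ and $q$, yields the asserted bound. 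Throughout, pointwise measurability of $\calF$ is used to pass to a countable subclass so that the suprema are measurable and the chaining arguments are legitimate.

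The hard part will be the bookkeeping in the decouple-and-peel step for the array (as opposed to genuine $U$-statistic) structure: since the sum runs over a full product grid and each level of $U$-variable is shared by a different slice of summands, one must track precisely which $\sigma$-field is conditioned on at each stage, verify that the complete degeneracy of $\pi_{\bm{e}}f$ is preserved so that every inner sum is genuinely conditionally mean-zero (hence amenable to the i.i.d.\ maximal inequality), and check that the covering numbers arising at each stage — including those of the lower-order pieces $\pi_{\bm{e}'}f$ with $\bm{e}'<\bm{e}$ that enter $\pi_{\bm{e}}f$ — are uniformly dominated by $N(P_{\bm{e}}\calF,\|\cdot\|_{Q,2},\cdot)$ over the random $Q$. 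This is exactly where the modifications relative to the finite-class argument of \cite{chiang2023inference} are needed.
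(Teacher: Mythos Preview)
Your first step (separately exchangeable symmetrisation) matches the paper, but from there the paper takes a much more direct route that sidesteps all of the ``hard parts'' you list in your final paragraph. Two simplifications you are missing:

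\emph{(i) Replace $\pi_{\bm{e}}$ by $P_{\bm{e}}$ immediately.} After symmetrisation, the paper uses convexity of $x\mapsto\|x\|_\calF^q$ together with Jensen's inequality to bound the symmetrised $\pi_{\bm{e}}$-process by the symmetrised $P_{\bm{e}}$-process, up to a constant depending only on $(q,K,k)$. This single step disposes of your entire concern about tracking covering numbers of the lower-order pieces $\pi_{\bm{e}'}f$.

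\emph{(ii) Treat the symmetrised process as a conditional Rademacher chaos and chain once.} Conditionally on $\{X_{\bm i}\}_{\bm i\in[\bN]}$, the object $|I_{\bN,\e}|^{-1/2}\sum_{\bm i}\varepsilon_{1,i_1}\cdots\varepsilon_{k,i_k}(P_{\bm e}f)(\cdot)$ is a homogeneous Rademacher chaos of order $k$ indexed by $\calF$. The paper first bounds the $L^q$ norm by the conditional $\psi_{2/k}$-Orlicz norm (Lemma~\ref{lemma: Orlicz norms}), then applies the chaining bound for Rademacher chaos (Corollary~5.1.8 of \cite{delaPenaGine1999}) in one stroke to obtain the entropy integral with exponent $k/2$ against the \emph{empirical} $L^2$ metric $\|\cdot\|_{\mathbb P_{I_{\bN,\e}},2}$. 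A change of variable plus Jensen on $\|P_{\bm e}F\|_{\mathbb P_{I_{\bN,\e}},2}$ gives $J_{\bm e}(1)\|F\|_{P,q\vee2}$ directly.

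Your ``decouple the shared lower-level $U$-variables'' step is both unsupported and unnecessary: there is no off-the-shelf decoupling theorem for the SE lattice structure, and none is needed, because once you condition on $\{X_{\bm i}\}$ the only remaining randomness is the product Rademacher signs, which are already independent across directions. Likewise, your iterative ``peel one direction at a time'' argument is essentially a re-derivation of the de~la~Pe\~na--Gin\'e chaos chaining result; invoking it directly, as the paper does, avoids the nested random envelope norms and the Hoffmann--J\o rgensen collapse you anticipate needing. In short: your plan is not wrong in spirit, but it manufactures difficulties that the paper's two observations above eliminate.
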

\begin{proof}
	
	By symmetrisation inequality for SE processes (Lemma B.1  in \cite{chiang2023inference}; note that it is dimension free), for independent Rademacher r.v.'s $(\varepsilon_{1,i_1})$,...,$(\varepsilon_{k,i_k})$ that are independent of $(X_\i)_{\i\in\N^K}$, one has
	\begin{align*}
		|I_{\bN,\e}|^{1/2}	\left(\E[\|H_\bN^\e(f)\|_\calF^q]\right)^{1/q}
	=&\left (\E \left [ \left \|\frac{1}{\sqrt{|I_{\bN,\e}|}} \sum_{\i\in I_{\bN,\e}} (\pi_{ \e}f )(\{U_{\i \odot \e'}\}_{\e'\le \e})   \right \|_{\calF}^{q} \right ] \right)^{1/q}\\
		\lesssim& 
		\left (\E \left [ \left \|\frac{1}{\sqrt{|I_{\bN,\e}|}} \sum_{\i\in I_{\bN,\e}}\varepsilon_{1,i_1}...\varepsilon_{k,i_k}\cdot(\pi_{ \e}f )(\{U_{\i \odot \e'}\}_{\e'\le \e}) \right \|_{\calF}^{q} \right ] \right)^{1/q}.
	\end{align*}
	By  convexity of supremum and $\cdot \mapsto (\cdot)^q$, Jensen's inequality implies that the RHS above can be upperbounded  up to a constant that depends only on $q$, $K$, and $k$ by
	\begin{align*}
		\left (\E \left [ \left \|\frac{1}{\sqrt{|I_{\bN,\e}|}} \sum_{\i\in I_{\bN,\e}}\varepsilon_{1,i_1}...\varepsilon_{k,i_k}\cdot(P_\e f )(\{U_{\i \odot \e'}\}_{\e'\le \e}) \right \|_{\calF}^{q} \right ] \right)^{1/q}.
	\end{align*}
Denote $\mathbb P_{I_{\bN,\e}}=|I_{\bN,\e}|^{-1}\sum_{\i \in I_{\bN,\e}}\delta_{\{U_{\i \odot \e'}\}_{\e'\le \e}}$, the empirical measure on the support of $\{U_{\i \odot \e'}\}_{\e'\le \e}$.
Observe that conditionally on $\{X_\i\}_{\i\in\bN}$, the object
\begin{align*}
\frac{1}{\sqrt{|I_{\bN,\e}|}}\sum_{\i\in I_{\bN,\e}}\varepsilon_{1,i_1}...\varepsilon_{k,i_k}(P_{\e}f )(\{U_{\i\odot \e'}\}_{\e'\le \e})
\end{align*}
is a homogeneous Rademacher chaos processes of order $k$.
	By Lemma \ref{lemma: Orlicz norms}, $L^q$ norm is bounded from above by $\psi_{2/k} $-norm up to a constant depends only on $(q,k)$, and thus by applying Corollary 5,1.8 in \cite{delaPenaGine1999}, one has
	\begin{align*}
		&\left (\E \left [ \left \|\frac{1}{\sqrt{|I_{\bN,\e}|}} \sum_{\i\in I_{\bN,\e}}\varepsilon_{1,i_1}...\varepsilon_{k,i_k} \cdot(P_{ \e}f )(\{U_{\i \odot \e'}\}_{\e'\le \e})  \right \|_{\calF}^{q} \right ] \right)^{1/q}\\
		\lesssim&
		\E \left [ \left\| \left \|\frac{1}{\sqrt{|I_{\bN,\e}|}} \sum_{\i\in I_{\bN,\e}}\varepsilon_{1,i_1}...\varepsilon_{k,i_k}\cdot(P_{ \e}f )(\{U_{\i \odot \e'}\}_{\e'\le \e})  \right \|_{\calF}\right\|_{\psi_{2/k}|(X_\i)_{\i\in [\bN]}}\right ] \\
		\lesssim&
		\E \left [ \int_0^{\sigma_{I_{\bN,\e}}}\left[1+\log N\left(P_{ \e}\calF,\|\cdot\|_{\mathbb P_{I_{\bN,\e}},2},\tau\right)\right]^{k/2}d \tau \right ],
	\end{align*}
	where $\sigma_{I_{\bN,\e}}^2:=\sup_{f\in\calF}\| P_\e f \|_{\mathbb P_{I_{\bN,\e}},2}^2
	$. Using a change of variable and the definition of $J_\e$, the above bound becomes
	\begin{align*}
		&\E \left [ \int_0^{\sigma_{I_{\bN,\e}}}\left[1+\log N\left(P_{ \e} \calF,\|\cdot\|_{\mathbb P_{I_{\bN,\e}},2},\tau\right)\right]^{k/2}d \tau \right ] \\
		=&\E \left [\|P_{ \e} F\|_{\mathbb P_{I_{\bN,\e}},2} \int_0^{\sigma_{I_{\bN,\e}}/\|P_{ \e} F\|_{\mathbb P_{I_{\bN,\e}},2}}\left[1+\log N\left(P_{ \e} \calF,\|\cdot\|_{\mathbb P_{I_{\bN,\e}},2},\tau\|P_{ \e} F\|_{\mathbb P_{I_{\bN,\e}},2}\right)\right]^{k/2}d \tau \right ]\\
		\le&
		\E \left [\|P_{ \e} F\|_{\mathbb P_{I_{\bN,\e}},2} J_\e\left(\sigma_{I_{\bN,\e}}/\|P_{ \e} F\|_{\mathbb P_{I_{\bN,\e}},2}\right) \right ]\\
		\le&J_\e\left(1\right) \|F\|_{P,q\vee 2} ,
	\end{align*}
	where the last inequality follows from Jensen's inequality.
\end{proof}

Although the global maximal inequality works for general $q$, in the case that the supremum of the first absolute moment is concerned, local maximal inequalities usually provides shaper bounds. 
The following is a  novel local maximal inequality for SE empirical processes. Unlike the proof of the global maximal inequality, which is largely analogous to the corresponding results for $U$-processes that can be found in \cite{ChenKato2019b}, its proof relies on a novel argument that utilises construction of a partition with the property that each block in the partition satisfies a transversality property. We present this construction in Lemma \ref{lemma:partition} below.
\begin{theorem}[Local maximal inequality for SE processes]\label{theorem:local_max_ineq}
		Suppose $\calF:\calS\to \R$ is a pointwise measurable class of functions  with an envelope $F$.
	Let $(X_\i)_{\i\in [\bm{N}]}$ be a sample from $S$-valued separately exchangeable random vectors $(X_\i)_{\i\in \N^K}$. Set $\e\in \{0,1\}^K$ and let $\sigma_\e$ be a constant such that 
	$\sup_{f\in\calF}\|P_\e f \|_{P,2}\le \sigma_\e \le \|P_\e F\|_{P,2}$,
	\[ \delta_\e=\sigma_\e/\|P_\e F\|_{P,2}\quad \text{ and } \quad M_\e=\max_{t\in [n]}(P_\e F)\left(\{U_{(t,...,t)\odot \e'}\}_{\e'\le \e}\right),\] then
	\begin{align*}
		|I_{\bN,\e}|^{1/2}\E \left [ \left \| H_{\bN}^\e(f)  \right \|_{\calF} \right ]  \lesssim
		J_\e(\delta_\e)
			\|P_\e F\|_{P,2}
	 +\frac{J_\e^2(\delta_\e)\|M_\e\|_{P,2}}{\sqrt{n}\delta_\e^2}.
	\end{align*}
\end{theorem}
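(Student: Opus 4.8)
The plan is to fix $\bm e\in\calE_k$ (writing $k=|\supp(\bm e)|$), bound $\E\bigl[\|H_{\bm N}^{\bm e}(f)\|_{\calF}\bigr]$, and then sum over $\bm e$ via the Hoeffding-type decomposition. First I would apply Lemma~\ref{lemma:partition} to write the index block as a disjoint union of transversals $I_{\bm N,\bm e}=\bigsqcup_{b=1}^{B}T_b$, with each $|T_b|\le n$ and $B\asymp|I_{\bm N,\bm e}|/n$. Transversality is the crucial feature: inside a single $T_b$ the tuples $\{\bm i\odot\bm e':\bm e'\le\bm e\}$, $\bm i\in T_b$, are pairwise disjoint, so the summands $(\pi_{\bm e}f)(\{U_{\bm i\odot\bm e'}\}_{\bm e'\le\bm e})$ are i.i.d.\ within a block and, by (SE), equidistributed with the diagonal summands $(\pi_{\bm e}f)(\{U_{(t,\dots,t)\odot\bm e'}\}_{\bm e'\le\bm e})$, $t\le|T_b|$; moreover the deepest latent variables $\{U_{\bm i}\}_{\bm i\in I_{\bm N,\bm e}}$ are globally i.i.d.\ and independent of all lower-order ones.

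The core of the proof is a conditional decoupling run recursively on $k$. Let $\calG$ be the $\sigma$-field generated by all lower-order latent variables $\{U_{\bm i\odot\bm e'}:\bm i\in I_{\bm N,\bm e},\ \bm 0\lneq\bm e'\lneq\bm e\}$. Conditioning on $\calG$, I would split
\[
\sqrt{|I_{\bm N,\bm e}|}\,H_{\bm N}^{\bm e}(f)=\frac{1}{\sqrt{|I_{\bm N,\bm e}|}}\sum_{\bm i\in I_{\bm N,\bm e}}\bigl\{\phi_{\bm i,f}(U_{\bm i})-\rho_{\bm i,f}\bigr\}+\frac{1}{\sqrt{|I_{\bm N,\bm e}|}}\sum_{\bm i\in I_{\bm N,\bm e}}\rho_{\bm i,f},
\]
where $\phi_{\bm i,f}(\cdot)=(\pi_{\bm e}f)(\cdot\,;\{U_{\bm i\odot\bm e'}\}_{\bm e'\lneq\bm e})$ is a $\calG$-measurable kernel and $\rho_{\bm i,f}=\E[\phi_{\bm i,f}(U_{\bm i})\mid\calG]$. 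Given $\calG$, the first sum is over mutually independent mean-zero terms (by the globality noted above); by Lemma~1 of \cite{chiang2023inference} and the tower property, the second sum is completely degenerate of strictly smaller depth and is absorbed by the induction hypothesis (after a further Hoeffding-type decomposition).

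To the first sum I would apply, conditionally on $\calG$, an independent-summands version of the local maximal inequality of \cite{vanderVaartWellner2011} and \cite{CCK2014AoS}, after truncating the envelope at a suitably chosen level $\tau_\ast$ (balancing, at the current recursion depth, the Dudley entropy term against the large-value remainder). The contribution below $\tau_\ast$ yields the entropy term; the contribution above $\tau_\ast$ is handled transversal by transversal — precisely where the partition's within-block independence is available — via the Hoffmann--J\o rgensen inequality, which upgrades the first-moment bound to the moment control needed to absorb the large values and, through the equidistribution with a diagonal block of size $\le n$, produces the correction $J_{\bm e}^2(\delta_{\bm e})\|M_{\bm e}\|_{P,2}/(\sqrt n\,\delta_{\bm e}^2)$. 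Iterating the decoupling $k$ times, each level contributes one factor $\{1+\log N(P_{\bm e}\calF,\|\cdot\|_{Q,2},\cdot)\}^{1/2}$, so the leading term accrues the full $k/2$ power, that is $J_{\bm e}(\delta_{\bm e})$. Finally, because conditional expectations are $L^2(P)$-contractions we have $\|\pi_{\bm e}f\|_{P,2}\lesssim\|P_{\bm e}f\|_{P,2}\le\sigma_{\bm e}$ and the covering numbers in play are dominated by those of $P_{\bm e}\calF$ with envelope $P_{\bm e}F$, which rewrites the bound in the stated form.

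I expect the main obstacle to be this last block of steps: controlling, uniformly over $\calF$ and without degrading the localization radius from $\delta_{\bm e}$ down to $1$, the random empirical variance and envelope proxies thrown up by the conditional maximal inequality at each of the $k$ recursion levels, and arranging the bookkeeping so that exactly $J_{\bm e}(\delta_{\bm e})$ and $J_{\bm e}^2(\delta_{\bm e})$ emerge — with the $1/\sqrt n$ (rather than $1/\sqrt{|I_{\bm N,\bm e}|}$) in the correction term forced by the fact that independence, hence the Hoffmann--J\o rgensen step, is available only inside the transversal blocks, whose size is capped at $n$.
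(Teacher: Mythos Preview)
Your plan diverges from the paper's argument and, as written, has a genuine gap. The paper does \emph{not} decouple recursively on $k$ or condition on lower-order latent variables. It works in one pass: apply the SE symmetrisation (Lemma~B.1 of \citealt{chiang2023inference}) to turn $H_{\bN}^{\e}(f)$ into a homogeneous Rademacher chaos of order $k$ in the product sign variables $\varepsilon_{1,i_1}\cdots\varepsilon_{k,i_k}$; replace $\pi_{\e}f$ by $P_{\e}f$ via Jensen; then, conditionally on $(X_{\i})$, invoke the chaining bound for Rademacher chaos (Corollary~5.1.8 of \citealt{delaPenaGine1999}). This delivers the exponent $k/2$ in $J_{\e}$ \emph{directly} and yields
\[
|I_{\bN,\e}|^{1/2}\,\E\bigl[\|H_{\bN}^{\e}(f)\|_{\calF}\bigr]\ \lesssim\ \|P_{\e}F\|_{P,2}\,J_{\e}(z),\qquad z^{2}=\E\bigl[\sigma_{I_{\bN,\e}}^{2}\bigr]\big/\|P_{\e}F\|_{P,2}^{2},
\]
where $\sigma_{I_{\bN,\e}}^{2}=\sup_{f\in\calF}\|P_{\e}f\|_{\mathbb P_{I_{\bN,\e}},2}^{2}$ is the random empirical $L^{2}$-radius. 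The transversal partition of Lemma~\ref{lemma:partition} enters only now, and for a narrower purpose than you envisage: to bound $\E[\sigma_{I_{\bN,\e}}^{2}]$. Jensen over the blocks reduces to a single diagonal block of size $n$ with genuinely i.i.d.\ summands; then independent symmetrisation, the contraction principle, Cauchy--Schwarz and Hoffmann--J\o rgensen produce an implicit inequality in $\tilde z\ge z$, closed by Lemma~2.1 of \cite{vanderVaartWellner2011}. No truncation is used, and the $1/\sqrt{n}$ in the correction arises because the diagonal block has length $n$.

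The gap in your scheme is twofold. First, conditionally on $\calG$ the summands $\phi_{\i,f}(U_{\i})-\rho_{\i,f}$ are independent but \emph{not} identically distributed: each kernel $\phi_{\i,f}$ carries its own $\calG$-measurable coefficients, so the i.i.d.\ local maximal inequality you cite does not apply as stated, and the covering numbers in play are those of a random, $\i$-varying class rather than of $P_{\e}\calF$. Second, and more seriously, the sentence ``iterating the decoupling $k$ times, each level contributes one factor $\{1+\log N\}^{1/2}$'' does not explain a mechanism that actually yields $J_{\e}(\delta_{\e})$ with exponent $k/2$: a single conditional pass gives a $1/2$-power entropy integral multiplied by a random radius, and bounding that radius by induction produces a \emph{product} of entropy integrals at possibly different scales, not a single integral with integrand raised to the $k/2$. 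Nothing in your outline recombines these factors into $J_{\e}(\delta_{\e})$ without degrading the localisation from $\delta_{\e}$ to $1$---which is precisely the difficulty you flag but do not resolve. The paper sidesteps both issues by handling the order-$k$ chaos in one shot and reserving the transversal partition solely for the (scalar) variance bound.
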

\begin{remark}\label{rem:local_max_ineq}
Although our proof strategy broadly follows that of Theorem 5.1 in \cite{ChenKato2019b}, a key divergence arises. In Theorem~5.1, the Hoffmann–J\o rgensen inequality (which requires independence) is applied via the classical \(U\)-statistic technique of Hoeffding averaging (see, for example, Section~5.1.6 in \citealt{serfling1980approximation}). However, the more intricate dependence structure inherent to separately exchangeable arrays renders Hoeffding averaging inapplicable in our context. To address this challenge, we introduce an alternative approach by establishing Lemma \ref{lemma:partition}, which partitions the index set \(I_{\bN,\e}\) into \(n\) transveral groups. Together with AHK representation \eqref{eq:AHK_representation}, this yields i.i.d. elements within each group, thereby facilitating the application of the Hoffmann–J\o rgensen inequality.
\end{remark}

\begin{proof}
	
We first state a  crucial technical lemma which will be used in the following proof, a proof of this lemma is provided in the end of this section.
	\begin{lemma}[Partitioning into transversal groups]\label{lemma:partition}
		For any $\e\in \{0,1\}^K$, $I_{\bN,\e}$ can be partitioned into subsets $G$'s of size $n$ such that each $G$ is transversal, that is, any two distinct tuples
		$
		(i_1,i_2,\dots,i_K),$	$(i_1',i_2',\dots,i_K')\in G
		$
		satisfy
		\[
		i_1\ne i_1',\quad i_2\ne i_2',\quad \dots,\quad i_K\ne i_K'.
		\]
	\end{lemma}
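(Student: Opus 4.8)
\textbf{Proof plan for Lemma~\ref{lemma:partition}.}
The plan is to reduce the statement to a classical combinatorial fact about Latin squares / the structure of the group $\mathbb{Z}_n$, after first normalizing the index set. Recall that $I_{\bN,\e} = \{\bm{i}\odot\bm{e} : \bm{i}\in[\bN]\}$, and its cardinality is $\prod_{k\in\supp(\bm{e})} N_k$; only the coordinates $k\in\supp(\bm{e})$ actually vary, the others being pinned to $0$. So without loss of generality I would work with the ``active'' index set $\prod_{k\in\supp(\bm{e})}[N_k]$ and, since $n = \min_k N_k \le N_k$ for every $k\in\supp(\bm{e})$, it suffices to exhibit a partition into transversal blocks of size $n$; the inactive coordinates are irrelevant to the transversality condition, which only concerns coordinates where values can differ.

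First I would reduce to the case where every active coordinate ranges over exactly $[n]$: partition each $[N_k]$ into $\lceil N_k/n\rceil$ consecutive intervals of size $n$ (the last one padded or handled separately — actually, for the cleanest argument, assume $n \mid N_k$ or simply note the construction below works interval-by-interval), and take the product partition of $\prod_{k}[N_k]$ into ``cells'' each of which is a product of $n$-element intervals. It then suffices to partition one such cell, which after relabeling is $[n]^{d}$ with $d = |\supp(\bm{e})|$, into transversal blocks of size $n$. Second, for $[n]^d$ I would use the explicit construction via modular arithmetic: identify $[n]$ with $\mathbb{Z}/n\mathbb{Z}$, and for each $(a_1,\dots,a_{d-1})\in(\mathbb{Z}/n\mathbb{Z})^{d-1}$ define the block
\[
G_{(a_1,\dots,a_{d-1})} = \bigl\{\, (t,\; t+a_1,\; t+a_2,\;\dots,\; t+a_{d-1}) \bmod n \;:\; t\in\mathbb{Z}/n\mathbb{Z} \,\bigr\}.
\]
Each such $G$ has exactly $n$ elements (the map $t\mapsto(t,t+a_1,\dots)$ is injective in the first coordinate), and it is transversal: if $(t,t+a_1,\dots)$ and $(t',t'+a_1,\dots)$ lie in the same block with $t\ne t'$, then in every coordinate $j$ the two values differ by $t-t'\ne 0 \pmod n$, so all coordinates differ. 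Distinct blocks are disjoint: if a tuple $(s_1,\dots,s_d)$ lies in $G_{(a_1,\dots,a_{d-1})}$ then necessarily $t=s_1$ and $a_j = s_{j+1}-s_1$, so the block is determined by the tuple; and the $n^{d-1}$ blocks have total size $n^{d-1}\cdot n = n^d = |[n]^d|$, hence they form a partition. Assembling the cell-wise partitions over all product-cells gives the desired partition of $I_{\bN,\e}$.

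The only genuine subtlety — and the step I would be most careful about — is the reduction when $n \nmid N_k$ for some active $k$: one cannot split $[N_k]$ into equal blocks of size $n$. The clean fix is to observe that transversality is a ``local'' condition and the construction above only needs each coordinate to take values in \emph{some} $n$-element set; so I would instead cover $[N_k]$ by (possibly overlapping is not needed — use a partition into one block of size $n$ and a remainder, then recurse, or) simply partition $[N_k] = \bigsqcup_{m} B_{k,m}$ into blocks $B_{k,m}$ each of size \emph{at most} $n$, fix bijections $B_{k,m}\hookrightarrow \mathbb{Z}/n\mathbb{Z}$ into an initial segment, and run the modular construction on each product $\prod_k B_{k,m_k}$; within such a product-cell the active coordinates live in sets of size $\le n$, and restricting the $n$ blocks $G_{(a_1,\dots,a_{d-1})}$ to those tuples actually present still yields transversal sets (a subset of a transversal set is transversal) that partition the cell — then one merges small leftover blocks across cells up to size $n$, which is harmless since merging transversal sets with pairwise-disjoint coordinate-value footprints preserves transversality, or one simply allows the final partition to have some blocks of size strictly less than $n$, which is still compatible with every subsequent use in the proof of Theorem~\ref{theorem:local_max_ineq} (where only an upper bound of $n$ on block size and the i.i.d.-within-block property, guaranteed by transversality together with \eqref{eq:AHK_representation}, are invoked). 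I would state the lemma's conclusion with ``size at most $n$'' if that simplifies the write-up, or pad the last block by reusing indices if exact size $n$ is needed; either way the combinatorial core is the modular-shift construction above.
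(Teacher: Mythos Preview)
Your modular-shift construction on $[n]^d$ is exactly the combinatorial heart of the paper's proof, so the core idea is right. The gap is in your reduction step. You try to pass to product-cells $\prod_k B_{k,m_k}$ with $|B_{k,m_k}|\le n$ and then run the mod-$n$ construction inside each cell; when $n\nmid N_k$ for some active $k$ this forces some cells to be ``thin'' in that coordinate, and the transversal blocks you obtain there have size strictly less than $n$. None of your proposed repairs actually closes this:
\begin{itemize}
\item Merging leftover blocks across cells requires pairwise-disjoint coordinate footprints in \emph{every} coordinate. But one active coordinate has $N_k=n$, so every cell uses the full set $[n]$ in that coordinate, and the footprints are never disjoint there.
\item Allowing blocks of size ``at most $n$'' does not match how the partition is used in Theorem~\ref{theorem:local_max_ineq}: the argument writes $\EN (P_\e f)^2$ as a convex combination of the $D_{f,\e}(G)$ and uses that each $D_{f,\e}(G)$ is distributed as the average over \emph{exactly} $n$ i.i.d.\ terms, so that the subsequent Hoffmann--J\o rgensen step yields the $1/\sqrt n$ rate. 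With blocks of size $m\ll n$ (and your construction can produce $m=1$ when some $N_j\equiv 1\pmod n$), the bound degrades.
\item ``Padding by reusing indices'' is incompatible with having a partition.
\end{itemize}

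The paper avoids the whole issue by a small but decisive change: instead of reducing mod $n$ in every coordinate, it reduces mod $N_j$ in coordinate $j$. Concretely, order so that $N_K=n$, let $t$ range over $[N_K]$, and set the $j$-th coordinate to $((t+g_j-2)\bmod N_j)+1$ for $j<K$. Because $N_K\le N_j$, the map $t\mapsto (t+g_j-2)\bmod N_j$ is injective on $[N_K]$, so each group is transversal; and for fixed $t$ the map $g_j\mapsto (t+g_j-2)\bmod N_j$ is a bijection of $[N_j]$, so the groups (indexed by $(g_1,\dots,g_{K-1})\in\prod_{j<K}[N_j]$) form an exact partition of $[\bN]$ into blocks of size exactly $n$. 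No cell decomposition, no divisibility hypothesis, no leftover blocks. Replacing your mod-$n$ by mod-$N_j$ in the $j$-th coordinate is the one missing idea.
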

	
	We now present the proof of Theorem \ref{theorem:local_max_ineq}.
	For an $\e\in \calE_1$, the summands are i.i.d. and thus the desired result follows directly from Lemma \ref{lemma:local_max_ineq_indep}. Therefore, we assume $K\ge 2$ and $\e\in \calE_k$ for a $k\in\{2,...,K\} $.  Assume without loss of generality that $\e$ consists of $1$'s in its first $k$ elements and zero elsewhere. By applying the symmetrisation of Lemma B.1 in \cite{chiang2023inference}, one has, for independent Rademacher r.v.'s $(\varepsilon_{1,i_1})$,...,$(\varepsilon_{k,i_k})$ that are independent of $(X_\i)_{\i\in\N^K}$, that 
	\begin{align*}
			|I_{\bN,\e}|^{1/2}\E[\|H_\bN^\e(f)\|_\calF]
		\lesssim&
		\E\left[\left|\frac{1}{\sqrt{|I_{\bN,\e}|}}\sum_{\i\in I_{\bN,\e}}\varepsilon_{1,i_1}...\varepsilon_{k,i_k}(\pi_{\e}f )(\{U_{\i\odot \e'}\}_{\e'\le \e})\right\|_\calF\right].
	\end{align*}
	Further, by convexity of supremum and Jensen's inequality, the RHS above can be upperbounded  up to a constant that depends only on $K$ and $k$  by
	\begin{align*}
			\E\left[\left|\frac{1}{\sqrt{|I_{\bN,\e}|}}\sum_{\i\in I_{\bN,\e}}\varepsilon_{1,i_1}...\varepsilon_{k,i_k}(P_{\e}f )(\{U_{\i\odot \e'}\}_{\e'\le \e})\right\|_\calF\right]
	\end{align*}
	
Denote $\mathbb P_{I_{\bN,\e}}=|I_{\bN,\e}|^{-1}\sum_{\i \in I_{\bN,\e}}\delta_{\{U_{\i \odot \e'}\}_{\e'\le \e}}$, the empirical measure on the support of $\{U_{\i \odot \e'}\}_{\e'\le \e}$.
Observe that conditionally on $\{X_\i\}_{\i\in\bN}$, the object
\begin{align*}
	R_\bN^\e(f)=\frac{1}{\sqrt{|I_{\bN,\e}|}}\sum_{\i\in I_{\bN,\e}}\varepsilon_{1,i_1}...\varepsilon_{k,i_k}(P_{\e}f )(\{U_{\i\odot \e'}\}_{\e'\le \e})
\end{align*}
is a homogeneous Rademacher chaos processes of order $k$. Further, following Corollary 3.2.6 in \cite{delaPenaGine1999}, for any $f,f'\in \calF$
\begin{align*}
\left\|R_\bN^\e(f)-R_\bN^\e(f')\right\|_{\psi_{2/k}|\{X_\i\}_{\i\in \bN}}\lesssim \left\|R_\bN^\e(f)-R_\bN^\e(f')\right\|_{\mathbb P_{I_{\bN,\e}},2}.
\end{align*}
Hence the diameter of the function class $\calF$ in $\|\cdot\|_{\psi_{2/k}|\{X_\i\}_{\i\in \bN}}$-norm is upperbounded by  $\sigma_{I_{\bN,\e}}^2$ up to a constant, where
 $\sigma_{I_{\bN,\e}}^2:=\sup_{f\in\calF}\| P_{\e}f \|_{\mathbb P_{I_{\bN,\e}},2}^2
$. 
By applying
 Fubini's theorem, Corollary 5,1.8 in \cite{delaPenaGine1999}, and a change of variables, we have
	\begin{align*}
		&\E\left[\left\|\frac{1}{\sqrt{|I_{\bN,\e}|}}\sum_{\i\in I_{\bN,\e}}\varepsilon_{1,i_1}...\varepsilon_{k,i_k}(P_{\e}f )(\{U_{\i\odot \e'}\}_{\e'\le \e})\right\|_\calF\right]\\
			\lesssim&
			\E\left[\left\|\left\|\frac{1}{\sqrt{|I_{\bN,\e}|}}\sum_{\i\in I_{\bN,\e}}\varepsilon_{1,i_1}...\varepsilon_{k,i_k}(P_{\e}f )(\{U_{\i\odot \e'}\}_{\e'\le \e})\right\|_\calF\right\|_{\psi_{2/k}|\{X_\i\}_{\i\in\bN}}\right]\\
		\lesssim&
		\E \left [ \int_0^{\sigma_{I_{\bN,\e}}}\left[1+\log N\left(P_{ \e}\calF,\|\cdot\|_{\mathbb P_{I_{\bN,\e}},2},\tau\right)\right]^{k/2}d \tau \right ] \\
		=&\E \left [
	\|P_\e F\|_{\mathbb P_{I_{\bN,\e}},2}
			 \int_0^{\sigma_{I_{\bN,\e}}/\|P_\e F\|_{\mathbb P_{I_{\bN,\e}},2}}\left[1+\log N\left(P_{ \e}\calF,\|\cdot\|_{\mathbb P_{I_{\bN,\e}},2},\tau\|P_\e F\|_{\mathbb P_{I_{\bN,\e}},2}\right)\right]^{k/2}d \tau \right ]\\
		\le&
		\E \left [\|P_\e F\|_{\mathbb P_{I_{\bN,\e}},2} J_\e\left(\sigma_{I_{\bN,\e}}/\|P_\e F\|_{\mathbb P_{I_{\bN,\e}},2}\right) \right ].
	\end{align*}
	By Lemma \ref{lemma:uniform_entropy_integral}, an application of Jensen's inequality yields
	\begin{align}
		|I_{\bN,\e}|^{1/2}\E[\|H_{\bN}^\e (f)\|_\calF]\lesssim&
		\|P_\e F\|_{P,2} J_\e\left(z\right), \label{eq:local_maximal_ineq_prelimary_bound}
	\end{align}
	where $z:=\sqrt{\E[\sigma_{I_{\bN,\e}}^2]/\|P_\e F\|_{P,2}^2}$.

	We now bound 
	\begin{align*}
		\E[\sigma_{I_{\bN,\e}}^2]=&\E\left[\left\|\frac{1}{|I_{\bN,\e}|}\sum_{\i \in I_{\bN,\e}}(P_\e f )^2\left(\{U_{\i \odot \e'}\}_{\e'\le \e}\right)\right\|_\calF\right].
	\end{align*}
We aim to apply the Hoffmann–J\o rgensen inequality to handle the squared summands. However, because the summands are not independent, we invoke Lemma \ref{lemma:partition}. By applying this lemma, we obtain a partition \(\mathcal{G}\) of \(I_{\bN,\e}\) into \(|\calG|=|I_{\bN,\e}|/n\) groups, each containing \(n\) i.i.d. observations. The i.i.d. property follows from the AHK representation \eqref{eq:AHK_representation} and the fact that within each group, any two observations share no common indices \(i_1,\dots,i_K\).

For each group \(G = \{\i_{1}(G), \i_{2}(G), \dots, \i_{n}(G)\} \in \mathcal{G}\), we define
\[
D_{f,\e}(G) = \frac{1}{n}\sum_{t=1}^n \Bigl(P_{\e}f\Bigr)^2\!\left(\{U_{\i_t(G)\odot \e'}\}_{\e'\le \e}\right),
\]
and let
\[
D_{f,\e} = \frac{1}{n}\sum_{t=1}^n \Bigl(P_{\e}f\Bigr)^2\!\left(\{U_{(t,\dots,t)\odot \e'}\}_{\e'\le \e}\right).
\]
		Then 
		we have  
		\begin{align*}
			\frac{1}{|I_{\bN,\e}|}\sum_{\i \in I_{\bN,\e}}(P_{\e}f )^2\left(\{U_{\i \odot \e'}\}_{\e'\le \e}\right)=\frac{1}{|I_{\bN,\e}|/n}\sum_{G\in \calG} D_{f,\e}(G).
		\end{align*}
Note that for each \(G\in\calG\), the AHK representation in \eqref{eq:AHK_representation} implies that \(D_{f,\e}\) and \(D_{f,\e}(G)\) are identically distributed. Consequently, by Jensen's inequality, we have
\begin{align*}
	\E\Bigl[\sigma_{I_{\bN,\e}}^2\Bigr] 
	&= \E\!\left[\left\|\frac{1}{|I_{\bN,\e}|/n}\sum_{G\in\calG} D_{f,\e}(G)\right\|_\calF\right] 
	\le \E\!\left[\left\|D_{f,\e}\right\|_\calF\right].
\end{align*}
Let us denote this bound by
\[
B_{n,\e} :=\E\!\left[\left\|D_{f,\e}\right\|_\calF\right]= \E\!\left[\left\|\frac{1}{n}\sum_{t=1}^n \Bigl(P_{\e}f\Bigr)^2\!\Bigl(\{U_{(t,\dots,t)\odot \e'}\}_{\e'\le \e}\Bigr)\right\|_\calF\right].
\]
	Thus $z\le\tilde z:=\sqrt{B_{n,\e}}/\|(\pi_\e)F\|_{P,2}$.
	Note that by symmetrisation inequality for independent processes, the contration principle (Theorem 4.12. in \citealt{ledoux1991probability}), and the Cauchy-Schwarz inequality, one has
	\begin{align*}
		B_{n,\e}=&\E\left[\left\|\frac{1}{n}\sum_{t=1}^n (P_{\e}f )^2\left(\{U_{(t,...,t)\odot \e'}\}_{\e'\le \e}\right)\right\|_{\calF}\right]\\
		\le&
		\sigma_\e^2 +\E\left[\left\|\frac{1}{n}\sum_{t=1}^n\left\{ (P_{\e}f )^2\left(\{U_{(t,...,t)\odot \e'}\}_{\e'\le \e}\right)-\E\left[(P_{\e}f )^2\right]\right\}\right\|_{\calF}\right]\\
		\lesssim&
		\sigma_\e^2 +\E\left[\left\|\frac{1}{n}\sum_{t=1}^n\varepsilon_t\cdot (P_{\e}f )^2\left(\{U_{(t,...,t)\odot \e'}\}_{\e'\le \e}\right)\right\|_{\calF}\right]\\
		\lesssim&
		\sigma_\e^2 +\E\left[M_\e\left\|\frac{1}{n}\sum_{t=1}^n\varepsilon_t\cdot (P_{\e}f )\left(\{U_{(t,...,t)\odot \e'}\}_{\e'\le \e}\right)\right\|_{\calF}\right]\\
		\le&
		\sigma_\e^2 +\|M_\e\|_{P,2}\sqrt{\E\left[\left\|\frac{1}{n}\sum_{t=1}^n\varepsilon_t\cdot (P_{\e}f )\left(\{U_{(t,...,t)\odot \e'}\}_{\e'\le \e}\right)\right\|_{\calF}^2\right]}.
	\end{align*}
	An application of Hoffmann-J\o rgensen's inequality (Proposition A.1.6 in \citealt{vdVW1996}) gives 
	\begin{align*}
		&\sqrt{\E\left[\left\|\frac{1}{n}\sum_{t=1}^n\varepsilon_t\cdot (P_{\e}f )\left(\{U_{(t,...,t)\odot \e'}\}_{\e'\le \e}\right)\right\|_{\calF}^2\right]}\\
		\lesssim&
		\E\left[\left\|\frac{1}{n}\sum_{t=1}^n\varepsilon_t\cdot (P_{\e}f )\left(\{U_{(t,...,t)\odot \e'}\}_{\e'\le \e}\right)\right\|_{\calF}\right]+\frac{1}{n}\|M_\e\|_{P,2}.
	\end{align*}
By employing analogous reasoning to that used in the initial part of the proof, we deduce that
	\begin{align*}
		&\E\left[\left\|\frac{1}{\sqrt{n}}\sum_{t=1}^n\varepsilon_t\cdot (P_{\e}f )\left(\{U_{(t,...,t)\odot \e'}\}_{\e'\le \e}\right)\right\|_{\calF}\right]\\
		\lesssim& \|P_\e F\|_{P,2}\int_{0}^{\tilde z} \sup_{Q}\sqrt{1+\log N(P_\e\calF,\|\cdot\|_{Q,2},\epsilon \|P_\e F\|_{Q,2})}d\epsilon.
	\end{align*}
	Note that the integral on the RHS can be bounded by $J_\e(\tilde z)$ and thus
	\begin{align*}
		B_{n,\e}\lesssim& \sigma_\e^2 + n^{-1}\|M_\e\|_{P,2}^2 + n^{-1/2} \|M_\e\|_{P,2} \|P_\e F\|_{P,2}J_\e(\tilde z).
	\end{align*}

	Define $$\Delta=(\sigma_\e\vee n^{-1/2}\|M_\e\|_{P,2})/\|P_\e F\|_{P,2},$$  it then follows that
	\begin{align*}
		\tilde z^2\lesssim \Delta^2 + \frac{\|M_\e\|_{P,2}}{\sqrt{n}\|P_\e F\|_{P,2}}J_\e(\tilde z).
	\end{align*}
	By applying Lemma \ref{lemma:uniform_entropy_integral} and  Lemma 2.1 of \cite{vanderVaartWellner2011} with $J=J_\e$, $A=\Delta$, $B=\sqrt{\|M_\e\|_{P,2}/\sqrt{n}\|P_\e F\|_{P,2}}$ and $r=1$, it yields that
	\begin{align*}
		J_\e(z)\le J_\e(\tilde z)\lesssim J_\e(\Delta)\left\{1+J_\e(\Delta)\frac{\|M_\e\|_{P,2}}{\sqrt{n}\|P_\e F\|_{P,2} \Delta^2}\right\}.
	\end{align*}
	Combining this with (\ref{eq:local_maximal_ineq_prelimary_bound}), we obtain the bound
	\begin{align}
		|I_{\bN,\e}|^{1/2}\E[\|H_{\bN}^\e (f)\|_\calF]\lesssim&
		J_\e(\Delta)\|P_{\e} F\|_{P,2} +\frac{J_\e^2(\Delta)\|M_\e\|_{P,2}}{\sqrt{n}\Delta^2}.\label{eq:local_maximal_ineq_main_bound}
	\end{align}
	Notice that $\delta_\e\le \Delta$ by their definitions. By Lemma \ref{lemma:uniform_entropy_integral}(iii), one has 
	\begin{align*}
		J_\e(\Delta)\le \Delta\frac{J_\e(\delta_\e)}{\delta_\e}=\max\left\{J_\e(\delta_\e), \frac{\|M_\e\|_{P,2} J_\e(\delta_\e)}{\sqrt{n}\|P_\e F\|_{P,2}\delta_\e}\right\}\le\max\left\{J_\e(\delta_\e), \frac{\|M_\e\|_{P,2} J_\e^2(\delta_\e)}{\sqrt{n}\|P_\e F\|_{P,2}\delta_\e^2}\right\},
	\end{align*}
	where the second inequality follows from the fact $J_\e(\delta_\e)/\delta_\e\ge J_\e(1)\ge 1$. Finally, using Lemma \ref{lemma:uniform_entropy_integral}(iii),
	\begin{align*}
		\frac{J_\e^2(\Delta)\|M_\e\|_{P,2}}{\sqrt{n}\Delta^2}\le \frac{J_\e^2(\delta_k)\|M_\e\|_{P,2}}{\sqrt{n}\delta_k^2}
	\end{align*}
	Combining the calculations with the bound in (\ref{eq:local_maximal_ineq_main_bound}), we have the desired inequality.
	
	\medskip
	
	\subsection*{Proof of Lemma \ref{lemma:partition}}
		For $K=1$, the result is trivial.  For $K\ge 2$,
		assume without loss of generality that
		$
		N_1 \ge N_2 \ge \cdots \ge N_K.
		$
		We prove for the case of $\e=(1,...,1)$ and $I_{\bN,\e}=[\bN]$ since other cases follow exactly the same arguments.

		Our goal is to partition $[\bN]$ into subsets (which we call \emph{groups}) of size \(N_K\) that are transversal.
		For \(j=1,2,\dots,K-1\), define $\phi_j\colon [N_K]\times [N_j] \to [N_j]$ by
		\[
		\phi_j(t,g)=  ((t+g-2) \mod N_j) + 1.
		\]
		That is, for each \(t\in [N_K]\) and \(g\in [N_j]\) the value \(\phi_j(t,g)\) is computed by adding \(t\) and \(g-1\), reducing modulo \(N_j\) (so that the result lies in \(\{0,1,\dots,N_j-1\}\)), and then adding 1 to get an element of \([N_j]\).
		For the \(K\)-th coordinate we set
		$
		\phi_K(t) = t \quad \text{for } t\in [N_K].
		$		
		Index the groups by
		\[
		(g_1,g_2,\dots,g_{K-1})\in [N_1]\times [N_2]\times \cdots \times [N_{K-1}].
		\]
		Then, for each such \((g_1,\dots,g_{K-1})\), define
		\[
		G_{(g_1,\dots,g_{K-1})} = \Bigl\{\, \Bigl( \phi_1(t,g_1),\, \phi_2(t,g_2),\, \dots,\, \phi_{K-1}(t,g_{K-1}),\, t \Bigr) : t\in [N_K] \,\Bigr\}.
		\]
		Thus, each group contains \(N_K\) elements.

		We now claim the transversality property in each group.
		For a fixed group \(G_{(g_1,\dots,g_{K-1})}\) and a fixed coordinate \(j\) (with \(1\le j\le K-1\)), the \(j\)th coordinate of an element is given by
		\[
		\phi_j(t,g_j) =  ((t+g_j-2) \mod N_j) + 1.
		\]
		Since the mapping
		\(
		t \mapsto ((t+g_j-2) \mod N_j) + 1
		\)
		is injective (note that \(N_K\le N_j\) so that there is no collision in the range), it follows that the \(j\)-th coordinates of the elements of \(G_{(g_1,\dots,g_{K-1})}\) are all distinct. For the \(K\)-th coordinate, the identity mapping \(t \mapsto t\) is trivially injective.

		Next we show the covering of $[\bN]$ and  disjointness of the groups.
		Recall that the total number of groups is 
		\(
	 N_1\cdot N_2\cdots N_{K-1}.
		\)
		Each group has \(N_K\) elements; hence, the union of all groups has
		\(
		(N_1\cdot N_2\cdots N_{K-1})\cdot N_K = N
		\)
		elements. For surjectivity, let $x = (x_1, x_2, \ldots, x_K)$ be an arbitrary element of 
		$[\bN] = [N_1] \times [N_2] \times \cdots \times [N_K]$. We wish to show that there exist 
		$(g_1, g_2, \ldots, g_{K-1}) \in [N_1] \times [N_2] \times \cdots \times [N_{K-1}]$ and 
		$t \in [N_K]$ such that 
		$$
		x = \Bigl( \phi_1(t, g_1),\, \phi_2(t, g_2),\, \dots,\, \phi_{K-1}(t, g_{K-1}),\, t \Bigr).
		$$ 
		Set $t = x_K$. Then for each $j = 1, 2, \ldots, K-1$, we must have 
		$
		\phi_j(x_K, g_j) = x_j,
		$
		where by definition $\phi_j(x_K, g_j) = (((x_K + g_j - 2) \bmod N_j) + 1)$. Notice that for 
		each fixed $x_K$, the mapping 
		$$
		g \mapsto ((x_K + g - 2) \bmod N_j) + 1
		$$ 
		is an affine function (with coefficient $1$) on the cyclic group $\mathbb{Z}/N_j$, and hence it is a bijection from $[N_j]$ onto $[N_j]$. Thus, for each $j$ there exists a unique $g_j \in [N_j]$ such that $\phi_j(x_K, g_j) = x_j$. Therefore, every $x \in S$ can be uniquely written in the form 
		$$
		\Bigl( \phi_1(x_K, g_1),\, \phi_2(x_K, g_2),\, \dots,\, \phi_{K-1}(x_K, g_{K-1}),\, x_K \Bigr),
		$$ 
		which shows that the mapping 
		$$
		\tau: (g_1,\dots,g_{K-1},t) \mapsto \Bigl( \phi_1(t, g_1),\, \phi_2(t, g_2),\, \dots,\, \phi_{K-1}(t, g_{K-1}),\, t \Bigr)
		$$ 
		is bijective.

		Thus, the collection
		\[
		\mathcal{G} = \Bigl\{\, G_{(g_1,\dots,g_{K-1})} : (g_1,\dots,g_{K-1})\in [N_1]\times \cdots \times [N_{K-1}]\,\Bigr\}
		\]
		is a partition of $[\bN]$ into groups of size \(N_K\), and in every group the entries in each coordinate are distinct.

	\end{proof}


Following Chapter 3.7 in \cite{gine2016mathematical}, a function class $\calF$ on $\calS$ with envelope $F$ is called Vapnik–Chervonenkis-type (VC-type) with characteristics $(A,v)$ if 
\begin{align*}
	\sup_Q N(\calF,\|\cdot\|_{Q,2},\varepsilon\|F\|_{Q,2})\le \left(\frac{A}{\varepsilon}\right)^v\: \text{ for all } 0<\varepsilon\le 1,
\end{align*}
where the supremum is taken over all finite discrete distributions.
By adapting the arguments used in the proofs of Corollaries 5.3 and 5.5 and Lemma 5.4 in \cite{ChenKato2019b}, we derive the following local maximal inequality for VC-type function classes.
\begin{corollary}\label{cor:local_maximal_ineq_VC}
	Under the same setting as in Theorem \ref{theorem:local_max_ineq}. In addition, suppose $\calF$ is of VC-type with characteristics $A\ge (e^{2(K-1)}/16)\vee e$ and $v\ge 1$, then for each $\e\in \calE_k$, one has
	\begin{align*}
		|I_{\bN,\e}|^{1/2}\E \left [ \left \| H_{\bN}^\e(f)  \right \|_{\calF} \right ]  \lesssim
		\sigma_\e\{v\log (A\vee \overline N)\}^{k/2} +\frac{\|M_\e\|_{P,2}}{\sqrt{n}}\{v\log (A\vee \overline N)\}^k.
	\end{align*}
\end{corollary}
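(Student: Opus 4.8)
The plan is to specialize Theorem \ref{theorem:local_max_ineq} to the VC-type case by bounding the uniform entropy integral $J_\e(\delta)$ from above when $\calF$ has characteristics $(A,v)$. First I would recall that by definition of VC-type,
\[
1 + \log N\bigl(P_\e\calF, \|\cdot\|_{Q,2}, \tau\|P_\e F\|_{Q,2}\bigr) \le 1 + \log\bigl((A/\tau)^v\bigr) = 1 + v\log(A/\tau),
\]
valid for $0 < \tau \le 1$ (here I use that $P_\e\calF$ inherits VC-type characteristics from $\calF$ with the same $(A,v)$, since $P_\e$ is a conditional expectation and hence an $L^2(Q)$-contraction in the relevant sense — this is exactly the kind of step taken in the proofs of Corollaries 5.3–5.5 of \cite{ChenKato2019b}). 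Consequently, for $\delta \le 1$,
\[
J_\e(\delta) \le \int_0^\delta \bigl\{1 + v\log(A/\tau)\bigr\}^{k/2}\,d\tau.
\]
A standard estimate for this integral (using $A \ge e$ so that $1 + v\log(A/\tau) \le (1+v)\log(A/\tau) \lesssim v\log(A/\tau)$ up to absolute constants on $(0,\delta]$, and the substitution $\tau = A e^{-u}$) gives $J_\e(\delta) \lesssim \delta\sqrt{v}\bigl(\log(A/\delta)\bigr)^{k/2}$, or more crudely $J_\e(\delta) \lesssim \delta\{v\log(A/\delta)\}^{k/2}$ — the condition $A \ge e^{2(K-1)}/16$ is exactly what is needed to absorb the various $k$-dependent constants cleanly (this mirrors Lemma 5.4 in \cite{ChenKato2019b}).

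Next I would plug this entropy bound into the conclusion of Theorem \ref{theorem:local_max_ineq}. The first term becomes
\[
J_\e(\delta_\e)\|P_\e F\|_{P,2} \lesssim \delta_\e\{v\log(A/\delta_\e)\}^{k/2}\|P_\e F\|_{P,2} = \sigma_\e\{v\log(A/\delta_\e)\}^{k/2},
\]
using $\delta_\e = \sigma_\e/\|P_\e F\|_{P,2}$. For the second term,
\[
\frac{J_\e^2(\delta_\e)\|M_\e\|_{P,2}}{\sqrt{n}\,\delta_\e^2} \lesssim \frac{\delta_\e^2\{v\log(A/\delta_\e)\}^{k}\|M_\e\|_{P,2}}{\sqrt{n}\,\delta_\e^2} = \frac{\|M_\e\|_{P,2}}{\sqrt{n}}\{v\log(A/\delta_\e)\}^{k}.
\]
It then remains to replace $\log(A/\delta_\e)$ by $\log(A\vee\overline N)$. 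Here the key observation is that $\delta_\e$ cannot be too small: since $\sigma_\e \ge \sup_{f}\|P_\e f\|_{P,2}$ and $\sigma_\e$ is assumed to satisfy $\sigma_\e \le \|P_\e F\|_{P,2}$, one has $\delta_\e \le 1$; and for a lower bound one argues, as in \cite{ChenKato2019b}, that without loss of generality $\delta_\e \gtrsim 1/\overline N$ (if $\delta_\e$ were smaller the bound would be vacuous or one can enlarge $\sigma_\e$ to this floor, since the theorem holds for any valid $\sigma_\e$). Hence $\log(A/\delta_\e) \lesssim \log(A\overline N) \lesssim \log(A\vee\overline N)$, completing the estimate.

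I expect the main obstacle to be the careful handling of the constants in the entropy-integral bound — in particular, verifying that the hypothesis $A \ge (e^{2(K-1)}/16)\vee e$ is precisely what makes $J_\e(\delta) \lesssim \delta\{v\log(A/\delta)\}^{k/2}$ hold with an absolute (not $K$-dependent) implied constant, which requires tracking where factors like $2^{k}$ or $(k/2)!$ enter when one expands $\{1+v\log(A/\tau)\}^{k/2}$ and integrates. A secondary subtlety is the legitimacy of the "$\delta_\e \gtrsim 1/\overline N$ without loss of generality" reduction: one must check that enlarging $\sigma_\e$ up to this floor keeps the constraint $\sigma_\e \le \|P_\e F\|_{P,2}$ consistent (which holds as long as $\|P_\e F\|_{P,2} \gtrsim 1/\overline N$, a mild nondegeneracy that can be assumed or the statement is trivial), and that the monotonicity properties in Lemma \ref{lemma:uniform_entropy_integral}(iii) let one pass from $J_\e(\delta_\e)$ at the enlarged value back to a clean bound. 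Everything else is the routine substitution carried out above.
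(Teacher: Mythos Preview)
Your proposal is correct and follows essentially the same route the paper takes: the paper does not spell out a proof but simply points to Corollaries~5.3, 5.5 and Lemma~5.4 of \cite{ChenKato2019b}, and your plan is precisely that adaptation --- show $P_\e\calF$ inherits the VC-type bound, deduce $J_\e(\delta)\lesssim \delta\{v\log(A/\delta)\}^{k/2}$ (with the hypothesis on $A$ absorbing the $k$-dependent constants), substitute into Theorem~\ref{theorem:local_max_ineq}, and then replace $\log(A/\delta_\e)$ by $\log(A\vee\overline N)$ via the floor $\delta_\e\gtrsim 1/\overline N$. The last reduction can be made rigorous exactly as you indicate: enlarge $\sigma_\e$ to $\sigma_\e\vee(\|P_\e F\|_{P,2}/\overline N)$, apply the theorem there, and absorb the extra piece $\|P_\e F\|_{P,2}\overline N^{-1}\{v\log(A\vee\overline N)\}^{k/2}$ into the second term using $\|M_\e\|_{P,2}\ge\|P_\e F\|_{P,2}$, $\overline N\ge\sqrt{n}$, and $v\log(A\vee\overline N)\ge 1$.
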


\section{Conclusion}

In this paper, we have derived novel maximal inequalities for empirical processes associated with separately exchangeable (SE) arrays. Our contributions include a global maximal inequality that bounds the \(q\)-th moment of the supremum for any \(q\in[1,\infty)\), as well as a refined local maximal inequality controlling the first absolute moment, both established for a general pointwise measurable class of functions. These results extend the literature beyond the i.i.d. case and overcome the challenges posed by the intricate dependence structure of SE arrays.

A key innovation of our approach is the introduction of a new proof technique—partitioning the index set into transversal groups—which circumvents the limitations of classical tools such as Hoeffding averaging. This advancement not only fills an important gap in the theoretical framework for SE arrays, but also paves the way for more robust applications in econometric and statistical inference.
Future research may build on these findings to further explore maximal inequalities under even broader conditions and to enhance their applicability in high-dimensional and machine learning contexts.

\appendix
\section{Auxiliary Lemmas}
The following restates Theorem 5.2 in \cite{CCK2014AoS}, which is a modification of Theorem 2.1 in \cite{vanderVaartWellner2011} to allow for an unbounded envelope. 
\begin{lemma}[Local maximal inequality under i.i.d.]\label{lemma:local_max_ineq_indep}
Let $X_1,...,X_n$ be $S$-valued i.i.d. random variables. Suppose $0<\|F\|_{P,2}<\infty$ and let $\sigma^2$ be any positive constant such that $\sup_{f\in \calF} P f^2\le \sigma^2 \le \|F\|_{P,2}^2$. Set $\delta^2=\sigma/\|F\|_{P,2} $ and $B=\sqrt{\E[\max_{i\in [n]} F^2(X_i)]}$. Then
\begin{align*}
\E[\|\Gn f\|_\calF] 
\lesssim \|F\|_{P,2} J(\delta,\calF,F) + \frac{B J^2(\delta,\calF,F)}{\delta^2 \sqrt{n}}.
\end{align*}
Suppose that, in addition, $\calF$ is VC-type with characteristics $(A,v)$.
Then
\begin{align*}
\E[\|\Gn f\|_\calF] 
\lesssim \sigma\sqrt{v  \log\left(\frac{A\|F\|_{P,2}}{\sigma}\right)} + \frac{vB }{\sqrt{n}} \log\left(\frac{A\|F\|_{P,2}}{\sigma}\right).
\end{align*}
\end{lemma}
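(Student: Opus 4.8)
The plan is to recognise this statement as a direct restatement of Theorem~5.2 in \cite{CCK2014AoS} (itself a modification of Theorem~2.1 in \cite{vanderVaartWellner2011} that accommodates an unbounded envelope), so the cleanest route is simply to invoke that result. For completeness I would supply the self-contained argument, which is precisely the $k=1$ specialisation of the chaos-based proof of Theorem~\ref{theorem:local_max_ineq}: with a single index there is no multiway dependence, the transversal partition of Lemma~\ref{lemma:partition} is trivial, and the homogeneous Rademacher chaos of order $k$ collapses to an ordinary sub-Gaussian Rademacher sum.

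First I would symmetrise: by the classical symmetrisation inequality, $\E[\|\Gn f\|_\calF]$ is bounded, up to an absolute constant, by the expected supremum of the Rademacher process $n^{-1/2}\sum_i\varepsilon_i f(X_i)$. Conditionally on $(X_i)_{i\in[n]}$ this process is sub-Gaussian with respect to the $L^2(\mathbb{P}_n)$ metric, so Dudley's entropy integral gives a conditional bound of the form $\|F\|_{\mathbb{P}_n,2}\,J\!\left(\sigma_n/\|F\|_{\mathbb{P}_n,2},\calF,F\right)$, where $\sigma_n^2:=\sup_{f\in\calF}\mathbb{P}_n f^2$. Taking expectations, bounding the random radius by Jensen's inequality, and writing $z:=\sqrt{\E[\sigma_n^2]}/\|F\|_{P,2}$, I obtain the preliminary bound $\E[\|\Gn f\|_\calF]\lesssim\|F\|_{P,2}\,J(z,\calF,F)$, exactly mirroring \eqref{eq:local_maximal_ineq_prelimary_bound}.

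Next I would control $\E[\sigma_n^2]$. Writing $\sigma_n^2\le\sigma^2+\|\mathbb{P}_n(f^2-Pf^2)\|_\calF$, a second symmetrisation turns the fluctuation term into $\E[\|n^{-1}\sum_i\varepsilon_i f^2(X_i)\|_\calF]$; the contraction principle (the squaring map is $2\max_i F(X_i)$-Lipschitz on $[-F,F]$) then replaces $f^2$ by $f$ while factoring out $\max_i F(X_i)$, and Cauchy--Schwarz extracts $B=\sqrt{\E[\max_i F^2(X_i)]}$. The Hoffmann--J\o rgensen inequality converts the resulting second moment of the Rademacher supremum into its first moment plus an $n^{-1}B$ remainder, and re-running the Dudley bound on that first moment produces a self-referential inequality of the form $z^2\lesssim\Delta^2+(B/(\sqrt{n}\|F\|_{P,2}))\,J(z)$ with $\Delta=(\sigma\vee n^{-1/2}B)/\|F\|_{P,2}$. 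Solving this fixed-point inequality via Lemma~2.1 of \cite{vanderVaartWellner2011} bounds $J(z)$ by $J(\delta)$ inflated by a factor $1+J(\delta)B/(\sqrt{n}\|F\|_{P,2}\delta^2)$; substituting into the preliminary bound yields the first display.

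For the VC-type refinement I would insert the standard entropy estimate $J(\delta,\calF,F)\lesssim\delta\sqrt{v\log(A/\delta)}$, valid for a class with characteristics $(A,v)$, into the two terms of the general bound and simplify using $\delta=\sigma/\|F\|_{P,2}$, so that $A/\delta=A\|F\|_{P,2}/\sigma$ and the logarithmic factor becomes $v\log(A\|F\|_{P,2}/\sigma)$; this reproduces the second display. I expect the main obstacle to be the self-bounding step—closing the fixed-point inequality for $z$ while simultaneously accommodating the unbounded envelope through the Hoffmann--J\o rgensen passage and the $B$-term—rather than the entropy bookkeeping, which is routine once the random radius has been pinned down.
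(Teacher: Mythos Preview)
Your proposal is correct and matches the paper's approach exactly: the paper does not prove this lemma at all but simply records it as a restatement of Theorem~5.2 in \cite{CCK2014AoS} (itself a modification of Theorem~2.1 in \cite{vanderVaartWellner2011}), which is precisely the route you identify in your first sentence. The self-contained sketch you add---the $k=1$ specialisation of the argument for Theorem~\ref{theorem:local_max_ineq}---is accurate and goes beyond what the paper provides, but it is not required here.
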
 
The following restates Lemma B.3 in \cite{chiang2023inference}.
\begin{lemma}[Bounding $L^q$-norm by Orlicz norm]
	\label{lemma: Orlicz norms}
	Let $0 < \beta < \infty$ and $1 \le q < \infty$ be given, and let $m = m(\beta,q)$ be the smallest positive integer satisfying $m\beta \ge q$. Then for every real-valued random variable $\xi$, we have $(\E[|\xi|^q])^{1/q} \le (m!)^{1/( m\beta)} \| \xi \|_{\psi_\beta}$. 
\end{lemma}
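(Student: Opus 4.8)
The plan is to reduce the $L^q$-norm to a single integer power $m\beta\ge q$ and then exploit the Taylor expansion of $\psi_\beta$. First I would dispose of the trivial cases: if $\|\xi\|_{\psi_\beta}\in\{0,\infty\}$ the claim holds vacuously (when the norm is $0$ we have $\xi=0$ a.s., and when it is $\infty$ the right-hand side is $\infty$), so I assume $0<C:=\|\xi\|_{\psi_\beta}<\infty$.

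The first genuine step is to verify that the infimum defining the Orlicz norm is attained, i.e. $\E[\psi_\beta(|\xi|/C)]\le 1$. Picking a sequence $C_j\downarrow C$ from the defining set, one has $\psi_\beta(|\xi|/C_j)\uparrow \psi_\beta(|\xi|/C)$ pointwise, since $\psi_\beta$ is nondecreasing and $1/C_j\uparrow 1/C$; monotone convergence then gives $\E[\psi_\beta(|\xi|/C)]=\lim_j\E[\psi_\beta(|\xi|/C_j)]\le 1$. Writing $Y:=(|\xi|/C)^\beta\ge 0$, this reads $\E[e^{Y}-1]\le 1$.

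The crux is the elementary inequality $Y^m\le m!\,(e^{Y}-1)$, valid for every $Y\ge 0$, which follows by retaining only the $k=m$ term in $e^{Y}-1=\sum_{k\ge 1}Y^k/k!$ (all terms are nonnegative and $m\ge 1$). Keeping the $-1$ here, rather than using the cruder $Y^m\le m!\,e^{Y}$, is precisely what avoids a spurious factor $2^{1/(m\beta)}$ and produces the stated constant. Taking expectations yields $\E[Y^m]\le m!\,\E[e^{Y}-1]\le m!$. Since $\E[Y^m]=C^{-m\beta}\,\E[|\xi|^{m\beta}]$, this rearranges to $\bigl(\E[|\xi|^{m\beta}]\bigr)^{1/(m\beta)}\le (m!)^{1/(m\beta)}\,C$.

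Finally, because $m$ is chosen so that $q\le m\beta$, Lyapunov's inequality (monotonicity of $L^p$-norms on a probability space) gives $\bigl(\E[|\xi|^{q}]\bigr)^{1/q}\le\bigl(\E[|\xi|^{m\beta}]\bigr)^{1/(m\beta)}$, and combining with the previous display delivers $\bigl(\E[|\xi|^{q}]\bigr)^{1/q}\le (m!)^{1/(m\beta)}\,\|\xi\|_{\psi_\beta}$. The only step requiring a little care is the attainment of the infimum in the Orlicz norm; the decisive idea is the sharp Taylor bound with the $-1$ retained, after which the argument is a short chain of routine estimates.
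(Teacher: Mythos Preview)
Your argument is correct: the attainment of the infimum via monotone convergence, the Taylor bound $Y^m\le m!\,(e^Y-1)$ for $Y\ge 0$ and $m\ge 1$, and the final Lyapunov step are all sound and yield exactly the constant $(m!)^{1/(m\beta)}$. The paper itself does not prove this lemma---it merely restates Lemma~B.3 of \cite{chiang2023inference}---so there is no in-paper argument to compare against; your proof is the natural one and would serve as a self-contained justification.
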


The following is analogous to Lemma 5.2 in \cite{ChenKato2019b} and Lemma A.2 in \cite{CCK2014AoS}.
\begin{lemma}[Properties of $J_\e(\delta)$]\label{lemma:uniform_entropy_integral}
Suppose that $J_\e(1)<\infty$ for $\e\in\{0,1\}^K$, then for all $\e\in\{0,1\}^K$,
\begin{enumerate}[(i)]
\item $\delta \mapsto J_\e(\delta)$ is non-decreasing and concave.
\item For $c\ge1$, $J_\e(c\delta)\le c J_\e(\delta)$.
\item $\delta\mapsto J_\e(\delta)/\delta$ is non-increasing. 
\item  $(x,y)\mapsto J_\e(\sqrt{x/y})\sqrt{y}$ is jointly concave in $(x,y)\in [0,\infty)\times (0,\infty)$.
\end{enumerate}
\end{lemma}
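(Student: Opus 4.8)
The plan is to reduce all four assertions to a single integral representation of $J_\e$ as a non-negative mixture of elementary concave functions, mirroring the structure of Lemma 5.2 in \cite{ChenKato2019b} and Lemma A.2 in \cite{CCK2014AoS}. Write
\[
J_\e(\delta)=\int_0^\delta \phi(\tau)\,d\tau,\qquad \phi(\tau):=\sup_Q\Bigl\{1+\log N\bigl(P_\e\calF,\|\cdot\|_{Q,2},\tau\|P_\e F\|_{Q,2}\bigr)\Bigr\}^{k/2},
\]
where $k=|\supp(\e)|$ is the exponent appearing in the definition of $J_\e$. First I would record two elementary properties of $\phi$ on $(0,\infty)$. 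It is non-negative since $1+\log N\ge 1$. It is also non-increasing: for each fixed finite discrete $Q$ the covering radius $\tau\|P_\e F\|_{Q,2}$ grows with $\tau$, so the covering number (hence $\log N$, and hence the $k/2$-power, as $x\mapsto x^{k/2}$ is non-decreasing on $[1,\infty)$) decreases; taking the supremum over $Q$ of non-increasing functions preserves monotonicity. The hypothesis $J_\e(1)<\infty$ guarantees $\phi$ is integrable near $0$, so every expression below is finite.

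The key step is a \emph{layer-cake} representation. Since $\phi\ge 0$, write $\phi(\tau)=\int_0^\infty \mathbf 1\{\phi(\tau)>\lambda\}\,d\lambda$; because $\phi$ is non-increasing, each super-level set is an interval $\{\tau\ge0:\phi(\tau)>\lambda\}=[0,a(\lambda))$, where $a(\lambda):=\sup\{\tau\ge0:\phi(\tau)>\lambda\}\in[0,\infty]$. Applying Tonelli's theorem (justified by non-negativity of the integrand) then yields
\[
J_\e(\delta)=\int_0^\infty\!\!\int_0^\delta \mathbf 1\{\tau<a(\lambda)\}\,d\tau\,d\lambda=\int_0^\infty \bigl(\delta\wedge a(\lambda)\bigr)\,d\lambda .
\]
Thus $J_\e$ is an average, against a non-negative measure in $\lambda$, of the hockey-stick functions $\delta\mapsto\delta\wedge a(\lambda)$.

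Items (i)--(iii) then follow immediately. For each fixed $a\ge0$ the map $\delta\mapsto\delta\wedge a$ is non-decreasing and concave, and both properties survive the non-negative integral over $\lambda$, giving (i); in particular $J_\e(0)=0$. For (iii) I would observe that $\delta\mapsto(\delta\wedge a)/\delta=\min(1,a/\delta)$ is non-increasing, so $J_\e(\delta)/\delta=\int_0^\infty\min(1,a(\lambda)/\delta)\,d\lambda$ is non-increasing. Statement (ii) is then just the rearrangement of $J_\e(c\delta)/(c\delta)\le J_\e(\delta)/\delta$ for $c\ge1$; alternatively, (ii) and (iii) are the standard consequences of concavity together with $J_\e(0)=0$.

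The main obstacle, and the reason for setting up the representation, is the joint concavity in (iv). Using the identity $\sqrt y\,(\sqrt{x/y}\wedge a)=\sqrt x\wedge a\sqrt y$ valid for $y>0$ and $a\ge0$, the representation gives
\[
\sqrt y\,J_\e\bigl(\sqrt{x/y}\bigr)=\int_0^\infty \Bigl(\sqrt x\wedge a(\lambda)\sqrt y\Bigr)\,d\lambda .
\]
For each fixed $\lambda$ the integrand is the pointwise minimum of $(x,y)\mapsto\sqrt x$ and $(x,y)\mapsto a(\lambda)\sqrt y$, each of which is jointly concave on $[0,\infty)\times(0,\infty)$ (being a concave function of a single coordinate, with the convention that $a(\lambda)=\infty$ collapses the minimum to $\sqrt x$). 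The pointwise minimum of two concave functions is concave, and joint concavity is preserved under the non-negative integral over $\lambda$, whose finiteness on the domain is guaranteed by (i) and (ii). This establishes (iv) and completes the proof.
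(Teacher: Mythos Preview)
Your proposal is correct. The paper does not actually supply a proof of this lemma; it merely records it as ``analogous to Lemma 5.2 in \cite{ChenKato2019b} and Lemma A.2 in \cite{CCK2014AoS}'' and leaves the verification to those references. Your layer-cake representation $J_\e(\delta)=\int_0^\infty(\delta\wedge a(\lambda))\,d\lambda$, obtained from the monotonicity of the integrand $\phi$, is a clean self-contained route: it reduces (i)--(iii) to the obvious properties of $\delta\mapsto\delta\wedge a$, and for (iv) the identity $\sqrt{y}\,(\sqrt{x/y}\wedge a)=\sqrt{x}\wedge a\sqrt{y}$ converts the problem into a non-negative mixture of minima of concave functions. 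This is in the same spirit as the cited proofs (which likewise exploit that $J_\e$ is the primitive of a non-increasing function, so concave with $J_\e(0)=0$); if anything, your mixture representation makes the joint concavity in (iv) slightly more transparent than a direct second-derivative or perspective-function argument.
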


\bibliographystyle{ecta}
\bibliography{biblio}

\end{document}